\documentclass{revtex4}

\usepackage{amsmath,amssymb,amsthm}
\usepackage{graphicx}

\newtheorem{theorem}{Theorem}

\newcommand\T{\rule{0pt}{5ex}}
\newcommand\B{\rule[-3ex]{0pt}{0pt}}
\newcommand\Ti{\rule{0pt}{3ex}}
\newcommand\Bi{\rule[-2ex]{0pt}{0pt}}

\begin{document}

\title{Relative Fisher information of discrete classical orthogonal polynomials}

\author{J.S. Dehesa$^{\rm a,c}$ $^{\ast}$ \thanks{$^\ast$Email: dehesa@ugr.es}, P. S\'anchez-Moreno$^{\rm b,c}$$^{\dagger}$\thanks{$^\dagger$Corresponding author. Email: pablos@ugr.es} \& R.J. Y\'a\~nez$^{\rm b,c}$$^{\ddagger}$\thanks{$^\ddagger$Email: ryanez@ugr.es}\\
\vspace{6pt}  $^{\rm a}${\em{Department of Atomic, Molecular and Nuclear Physics, University of Granada, Spain}};\\
$^{\rm b}${\em{Department of Applied Mathematics, University of Granada, Spain}};\\
$^{\rm c}${\em{Institute ``Carlos I'' for Theoretical and Computational Physics, University of Granada, Spain.}}}

\begin{abstract}
The analytic information theory of discrete distributions was initiated in 1998 by C. Knessl, P. Jacquet and S. Szpankowski who addressed the precise evaluation of the Renyi and Shannon entropies of the Poisson, Pascal (or negative binomial) and binomial distributions. They were able to derive various asymptotic approximations and, at times, lower and upper bounds for these quantities. Here we extend these investigations in a twofold way. First, we consider a much larger class of distributions, the Rakhmanov distributions $\rho_n(x)=\omega(x)y_n^2(x)$, where $\{y_n(x)\}$ denote the sequences of discrete hypergeometric-type polynomials which are orthogonal with respect to the weight function $\omega(x)$ of Poisson, Pascal, binomial and hypergeometric types; that is the polynomials of Charlier, Meixner, Kravchuk and Hahn. Second, we obtain the explicit expressions for the relative Fisher information of these four families of Rakhmanov distributions with respect to their respective weight functions.
\end{abstract}

%\begin{keywords}
%Relative Fisher information; discrete classical orthogonal polynomials; Charlier; Meixner; Kravchuk; Hahn
%\end{keywords}
%\begin{classcode}
%94A15; 94A17; 33C45; 30G25; 33D45
%\end{classcode}

\maketitle

\section{Introduction}

The knowledge of the algebraic and spectral properties of the orthogonal polynomials in a discrete variable \cite{alvareznodarse_03,chihara_78,ismail_05,koekoek_98,nikiforov_91} as well as the elucidation of their universal structure \cite{odake:jmp08} have been issues of permanent interest since the early years of the last century up until now, not only because of its mathematical interest \cite{alvareznodarse:amc02,aptekarev:ca08,alvareznodarse:jcam97,chihara_78,doliwa:jpa07,garcia:jcam95,lee_07,koekoek_98,nikiforov_91,ismail_05,odake:jmp08,lorente:jpa01} but also because of the increasing number of applications of these functions in so many scientific and technological fields \cite{boykin:ejp05,doliwa:jpa07,lorente:pla01,lorente:jpa01,lorente:jcam03,mickens:jdea05,smirnov:jpa84,deuflhard:icse89,vercin:jmp98,meiler:08,dehesa:maa97}. In particular, the classical or hypergeometric discrete orthogonal polynomials do not only play a relevant role in the theory of difference analogues of special functions and other branches of mathematics \cite{nikiforov_91,smirnov:jpa84,alvareznodarse:etna07,buyarov:jat99,dehesa:maa97,doliwa:jpa07,lorente:jpa01}, but also for mathematical modelling of a great deal of simple \cite{atakishiyev:tmp91,boykin:ejp05,lorente:pla01,lorente:jcam03,mickens:jdea05,smirnov:jpa84,suslov:sjnp84,vercin:jmp98} and complex \cite{deuflhard:icse89,carballo:aml01,savva:itsf00,defazio:ijqc03,meiler:08,deuflhard:icse89} systems, as well as for the compression of information for signal processing \cite{jacquet_98,nikiforov_88,odake:jmp08}.

From the 1990's it emerges an information theory of the special functions of applied mathematics and mathematical physics, which allows to compute the information-theoretic properties of the solvable and quasi-solvable quantum-mechanical systems.  This theory has been primarily devoted to orthogonal polynomials in a continuous variable with standard (i.e. non varying) weights \cite{aptekarev:rassm95,dehesa:maa97,vanassche:jmp95,dehesa:jcam01,buyarov:sjsc04,dehesa:jcam06,sanchezruiz:jcam05,yanez:jmp08,aptekarev:jcam09}, but soon was extended to varying weights \cite{buyarov:jat99} and to other special functions with non-polynomic form \cite{dehesa:jmp07,yanez:jmp99,sanchezmoreno:jpa05,dehesa:ijbc02,dehesa:jmp03}. Presently this theory is being extended to the orthogonal polynomials in a discrete variable, being the only results known those of Larsson-Cohn \cite{larsson:jat02} and Aptekarev et al. \cite{aptekarev:ca08} about the asymptotics of the Shannon information entropy of Charlier polynomials, and Dehesa et al. \cite{dehesa_05} about the standard or mean-square-root deviation of the four classical families and some bounds on the Shannon entropic power and the Fisher information of these families.

The information-theoretic/spreading properties of the discrete polynomials $P_n(x)$, $x\in[a,b]$, orthogonal with respect to a certain weight function $\omega(x)$ are given by the corresponding information-theoretic measures of the associated Rakhmanov probability density
\begin{equation}
\rho_n(x)=\frac{1}{d_n^2}[P_n(x)]^2 \omega(x),
\label{eq:rhon}
\end{equation}
where the normalization constant is given by the orthogonality relation
\begin{equation}
\sum_{x=a}^{b-1}\omega(x) P_n(x) P_m(x)=\delta_{n,m} d_n^2.
\label{eq:orthogonality}
\end{equation}
This density is important from both mathematical and physical standpoints. It governs the behaviour of the ratio $P_{n+1}(x)/P_n(x)$ when $n$ goes to infinity as Rakhmanov showed in 1977 \cite{rakhmanov:mus77}, and it characterizes the quantum-mechanical probability densities of ground and excited states of numerous physical systems (see, e.g. \cite{dehesa:jcam01,nikiforov_91,nikiforov_88,vercin:jmp98}).

Beyond the variance, the R\'enyi and Shannon information entropies have been used to quantify the spreading of various simple discrete distributions. Particularly relevant for these quantities are the integral representations of C. Knessl \cite{knessl:aml98} and the depoissonization process of P. Jacquet  and S. Szpankowski \cite{jacquet_98,jacquet:itit99} used to obtain the full asymptotics of the R\'enyi and Shannon entropies of the discrete Poisson, negative binomial or Pascal and binomial  distributions. The explicit computation of these quantities is still today a formidable open task. In this work we want to extend the results of these authors in a two-fold way. First, by considering a much larger class of discrete distributions composed by the Rakhmanov distributions associated to the classical discrete polynomials $P_n(x)$ which are defined by Eq. (\ref{eq:rhon}), where the weight function $\omega(x)$ has a Poisson, Pascal, binomial and hypergeometric analytical form; that is, the class of the Rakhmanov probability densities associated to the Charlier, Meixner, Kravchuk and Hahn polynomials. Second, by obtaining the explicit expression of the relative Fisher information of the four families of these Rakhmanov distributions. This information-theoretic quantity, unlike the R\'enyi and Shannon entropies which have a global character, has a locality property because it is a gradient functional of the density. So, it provides a quantifier of the gradient content of the density. For unimodal densities, it increases as the density is more and more localized. In general, for oscillating densities the Fisher information gives a measure of the oscillatory character of the density under consideration.

Although the notion of Fisher information was initially introduced by  Ronald A. Fisher as a measure to estimate a parameter of a probability density, we shall follow B. Roy Frieden who realized that the locality Fisher information (also called intrinsic accuracy) for continuous distributions is much more useful for applications in science and technology \cite{frieden_04}. For completeness we note that the parameter-based Fisher information was calculated for various continuous classical real orthogonal polynomials \cite{dehesa:jcam07} as well as for the discrete Hahn polynomials \cite{dominici:jcam09}, while the locality or shift-invariant Fisher information has been exactly calculated for the Hermite, Laguerre and Jacobi polynomials. The discretization of the locality Fisher information is discussed in detail in \cite{sanchezmoreno_08}, where the definition of the Fisher information for discrete orthogonal polynomials is introduced.
On the other hand the relative Fisher information of orthogonal polynomials $\{P_n(x)\}$ with respect to its weight function $\omega(x)$, $x\in(a,b)$, has been recently defined \cite{yanez:jmp08} in the continuous case as
\begin{equation}
I_\omega[P_n]=\int_a^b\rho_n(x)\left[\frac{d}{dx}\ln\frac{\rho_n(x)}{\omega(x)}\right]^2dx
=4\int_a^b \omega(x) \left[\frac{d}{dx}P_n(x)\right]^2dx.
\label{eq:fishercontinuous}
\end{equation}
This quantity has been discussed in detail and explicitly calculated for the Laguerre, Hermite and Jacobi polynomials and other special functions \cite{yanez:jmp08}. In the present work we define the relative Fisher information of the discrete orthogonal polynomials \cite{sanchezmoreno_08} as the discretization of the continuous one given by Eq. (\ref{eq:fishercontinuous}), that is by
\begin{equation}
I_\omega[P_n]=\frac{1}{d_n^2}\sum_{x=a}^{b-1} \omega(x)[\Delta P_n(x)]^2,
\label{eq:fisherdefinition}
\end{equation}
which will be called by as the relative Fisher information of the discrete polynomial $P_n(x)$, where $\Delta P_n(x)=P_n(x+1)-P_n(x)$ is the forward difference operator. Remark that $I_\omega[P_n]$ is the relative Fisher information of the Rakhmanov probability density given by (\ref{eq:rhon}) with respect to the weight function $\omega(x)$ of the polynomials. Moreover, it is always non-negative, being zero if and only if $n = 0$. So, this quantity measures the separation between the Rakhmanov density and the weight function of the polynomials or the perturbation experienced in the weight function by the action of the polynomials themselves.

Here we obtain the explicit expressions of this quantity for all the four families of discrete classical orthogonal polynomials. The structure of the paper is the following. In Section \ref{sec2} we describe two different methods to compute the relative Fisher information of the discrete polynomials: one based on the ladder relation satisfied by these polynomials and the other one makes use of their second order hypergeometric difference equation. Then, in Section \ref{sec3} we apply the ladder-relation-based method to calculate the relative Fisher information of the Charlier, Meixner, Kravchuk and Hahn polynomials. Finally, the resulting expressions are numerically investigated and discussed.

\section{Methodology}
\label{sec2}

In this Section we describe two methods to calculate the relative Fisher information $I_\omega[P_n]$ given by (\ref{eq:fisherdefinition}) for the discrete classical orthogonal polynomials  $P_n(x)$. These methods make use of two different characterizations \cite{alvareznodarse_03,chihara_78,ismail_05,nikiforov_91} of these discrete special functions: the ladder relation (also called difference or structure formula) and the second order hypergeometric difference equation.

The ladder-relation-based method first makes use of the difference formula of $P_n(x)$ which produces another polynomial of the same type of degree $n-1$ and parameters shifted in one unity save for the Charlier case (where the parameter remains unaltered), and then a connection formula which allows one to expand this polynomial in terms of the polynomials $P_j(x)$ with $j=0,1,\ldots,n-1$, so that we finally have the expansion
\begin{equation}
\Delta P_n(x)=\sum_{j=0}^{n-1} a_{j,n} P_j(x).
\label{eq:linearization}
\end{equation}
Then, the substitution of Eq. (\ref{eq:linearization}) into Eq. (\ref{eq:fisherdefinition}) together with the orthogonality relation (\ref{eq:orthogonality}) yields the following expression 
\[
I_\omega[P_n]=\frac{1}{d_n^2}\sum_{x=a}^{b-1} \omega(x)[\Delta P_n(x)]^2=\frac{1}{d_n^2}\sum_{j=0}^{n-1}a_{j,n} d_j^2.
\]
for the relative Fisher information of the polynomial $P_n(x)$. It only remains, of course, to use the known normalization constant $d_n^2$ and the expansion coefficients $a_{j,n}$ which will be computed for the four classical families (Charlier, Meixner, Kravchuk and Hahn). Full details for each family together with the final result are given in the next Section.

The difference-equation-based method to calculate $I_\omega[P_n]$ is based on the following result.

\begin{theorem}
Let $\rho_n(x)$ be the Rakhmanov probability density (\ref{eq:rhon}) associated to the discrete polynomial $P_n(x)$ which satisfies the orthogonality relation (\ref{eq:orthogonality}) and the difference equation
\begin{equation}
\sigma(x)\Delta\nabla P_n(x) +\tau(x)\Delta P_n(x)+\lambda P_n(x)=0,
\label{eq:differenceequation}
\end{equation}
where $\sigma(x)$ and $\tau(x)$ are polynomials of at most second and first degree, respectively, and $\lambda$ is a constant. The difference operators are defined as $\Delta P_n(x)=P_n(x+1)-P_n(x)$ and $\nabla P_n(x)=P_n(x)-P_n(x-1)$. Let $I_\omega[P_n]$ be the relative Fisher information of $P_n(x)$ given by Eq. (\ref{eq:fisherdefinition}). Then,
\begin{equation}
I_\omega[P_n]=\frac{1}{d_n^2}\left( \left.\omega(x-1)P_n^2(x)\right|_a^b+\left\langle \frac{\sigma(x)}{\tau(x-1)+\sigma(x-1)}\right\rangle \right)-1,
\label{eq:fishersigmatau}
\end{equation}
with the expectation value $\langle f(x)\rangle$ defined as
\[
\langle f(x)\rangle=\sum_{x=a}^{b-1}\rho_n(x) f(x).
\]
\end{theorem}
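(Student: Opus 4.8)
The plan is to start directly from the definition (\ref{eq:fisherdefinition}), expand the squared forward difference, and reduce the three resulting sums to the normalization constant $d_n^2$, an expectation value, and a boundary term, using only the orthogonality relation (\ref{eq:orthogonality}) together with the Pearson-type relation that the weight inherits from the difference equation (\ref{eq:differenceequation}). The one structural input I would record at the outset is that self-adjointness of (\ref{eq:differenceequation}) with respect to $\omega$ forces $\Delta[\sigma(x)\omega(x)]=\tau(x)\omega(x)$, equivalently $\sigma(x)\omega(x)=[\sigma(x-1)+\tau(x-1)]\omega(x-1)$. Shifting the argument by one then gives the identity $\frac{\sigma(x)}{\tau(x-1)+\sigma(x-1)}=\frac{\omega(x-1)}{\omega(x)}$, which is exactly the integrand appearing inside the expectation value of (\ref{eq:fishersigmatau}); this is the only point at which the data $\sigma,\tau$ of the difference equation enter.

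Concretely, I would write $d_n^2\, I_\omega[P_n]=\sum_{x=a}^{b-1}\omega(x)[P_n(x+1)-P_n(x)]^2$ and expand the bracket as $P_n^2(x+1)-2P_n(x)P_n(x+1)+P_n^2(x)$, producing three sums $A-2U+C$. The last is immediately $C=\sum_{x=a}^{b-1}\omega(x)P_n^2(x)=d_n^2$ by (\ref{eq:orthogonality}). For the cross term I would substitute $P_n(x+1)=P_n(x)+\Delta P_n(x)$, so that $U=d_n^2+\sum_{x=a}^{b-1}\omega(x)P_n(x)\Delta P_n(x)$; since $\Delta P_n$ is a polynomial of degree $n-1$ it is orthogonal to $P_n$, the remaining sum vanishes, and hence $U=d_n^2$.

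The first sum carries the nontrivial content. Shifting the index via $y=x+1$ gives $A=\sum_{y=a+1}^{b}\omega(y-1)P_n^2(y)$, and realigning the range to $a\le y\le b-1$ separates off precisely the endpoint contribution $\left.\omega(x-1)P_n^2(x)\right|_a^b$, leaving the interior sum $\sum_{y=a}^{b-1}\omega(y-1)P_n^2(y)$. Writing $\omega(y-1)=\omega(y)\,\frac{\omega(y-1)}{\omega(y)}$ and invoking the Pearson identity above rewrites this as $\sum_{y}\omega(y)P_n^2(y)\,\frac{\sigma(y)}{\tau(y-1)+\sigma(y-1)}=d_n^2\left\langle\frac{\sigma(x)}{\tau(x-1)+\sigma(x-1)}\right\rangle$. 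Assembling the three contributions and dividing by $d_n^2$ then produces the boundary term, the expectation value, and the additive constant $-1$ of (\ref{eq:fishersigmatau}), the numerical parts $C/d_n^2=1$ and $-2U/d_n^2=-2$ combining into that constant.

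I expect the main difficulty to be twofold and essentially of a bookkeeping nature. First, one must recognize that the cross term $U$ collapses because $\Delta$ lowers the degree, which is what makes orthogonality applicable and is the step where the eigenvalue $\lambda$ of (\ref{eq:differenceequation}) drops out entirely. Second, one must track the endpoint terms generated by the index shift in $A$ so that they assemble into the stated boundary expression $\left.\omega(x-1)P_n^2(x)\right|_a^b$ with the correct sign. The only analytic ingredient beyond elementary summation is the Pearson relation, which I would either cite from the standard Nikiforov--Uvarov characterization of $\{P_n(x)\}$ or derive in one line by symmetrizing (\ref{eq:differenceequation}) into $\Delta[\sigma(x)\omega(x)\nabla P_n(x)]=-\lambda\,\omega(x)P_n(x)$.
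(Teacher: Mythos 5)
Your derivation is correct and follows essentially the same route as the paper's own proof: the paper organizes the expansion through the identity $[\Delta P_n(x)]^2=\Delta P_n^2(x)-2P_n(x)\Delta P_n(x)$, kills the cross term by the same degree-plus-orthogonality argument, performs the same summation by parts to extract the boundary term, and invokes the same Pearson-type relation $\omega(x-1)/\omega(x)=\sigma(x)/[\tau(x-1)+\sigma(x-1)]$. Your direct expansion into $A-2U+C$ is the identical computation in slightly different packaging, and your degree argument for $U=d_n^2$ is exactly the paper's reason for discarding the cross term.

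There is, however, one point worth flagging, and it is in your favour. You correctly record that the interior sum equals $d_n^2\left\langle \frac{\sigma(x)}{\tau(x-1)+\sigma(x-1)}\right\rangle$, so your assembly actually yields
\[
I_\omega[P_n]=\frac{1}{d_n^2}\left.\omega(x-1)P_n^2(x)\right|_a^b+\left\langle \frac{\sigma(x)}{\tau(x-1)+\sigma(x-1)}\right\rangle-1,
\]
with the expectation value \emph{outside} the $1/d_n^2$ factor. This is not literally Eq. (\ref{eq:fishersigmatau}), which places $\langle\cdot\rangle$ inside the bracket, so your closing sentence claiming to recover (\ref{eq:fishersigmatau}) glosses over a genuine discrepancy. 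The slip is the paper's, not yours: in the last step of the paper's proof, the sum $\sum_{x=a}^{b-1}\omega(x-1)P_n^2(x)$ is replaced by the bare expectation value, whereas with the stated definitions $\langle f\rangle=\sum_x \rho_n(x)f(x)$ and $\rho_n=\omega P_n^2/d_n^2$ it equals $d_n^2$ times that expectation. A one-line check with the monic Charlier polynomial at $n=1$ confirms your version: there $\sigma(x)=x$, $\tau(x)=\mu-x$, hence $\sigma(x)/[\tau(x-1)+\sigma(x-1)]=x/\mu$; the boundary term vanishes, $d_1^2=\mu$, and $\langle x/\mu\rangle=(1+\mu)/\mu$, so your formula gives $(1+\mu)/\mu-1=1/\mu=n/\mu$, in agreement with Eq. (\ref{eq:fishercharlier}), while (\ref{eq:fishersigmatau}) as printed would give $(1+\mu)/\mu^2-1$. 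So: same method, correct execution, and your bookkeeping exposes a normalization error in the theorem as stated; you should present the corrected placement of $\langle\cdot\rangle$ explicitly rather than asserting agreement with the printed formula.
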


\begin{proof}
Using the identity
\[
 [\Delta P_n(x)]^2=\Delta P_n^2(x)-2P_n(x)\Delta P_n(x).
\]
one has that the summation involved in the relative Fisher information (\ref{eq:fisherdefinition}) can be expressed as
\[
\sum_{x=a}^{b-1} \omega(x)[\Delta P_n(x)]^2=\sum_{x=a}^{b-1} \omega(x) \Delta P_n^2(x)-2\sum_{x=a}^{b-1}\omega(x) P_n(x)
\Delta P_n(x).
\]
Taking into account that $\Delta P_n(x)$ is a polynomial of degree $n-1$ on $x$, the second sum vanishes. Thus,
\[
\sum_{x=a}^{b-1} \omega(x)[\Delta P_n(x)]^2= \sum_{x=a}^{b-1} \omega(x) \Delta P_n^2(x) = \sum_{x=a}^{b-1} \omega(x) P_n^2(x+1) -\sum_{x=a}^{b-1} \omega(x) P_n^2(x),
\]
where we have used the definition of the forward operator in the second equality. Now, one realizes that the second sum is just the normalization constant because of Eq. (\ref{eq:orthogonality}). Moreover, for the first sum we use the summation by parts
\[
\sum_{x=a}^{b-1}f(x+1)=f(x)|_a^b+\sum_{x=a}^{b-1}f(x),
\]
so that one finds that
\[
\sum_{x=a}^{b-1} \omega(x) \Delta P_n^2(x)=\left.\omega(x-1)P_n^2(x)\right|_a^b+\sum_{x=a}^{b-1}\omega(x-1)P_n^2(x)-d_n^2.
\]
Finally, taking into account the non trivial expression \cite{alvareznodarse_03,nikiforov_88,nikiforov_91}
\[
\frac{\omega(x-1)}{\omega(x)}=\frac{\sigma(x)}{\tau(x-1)+\sigma(x-1)},
\]
we obtain the result
\[
\sum_{x=a}^{b-1} \omega(x) \Delta P_n^2(x)=\left.\omega(x-1)P_n^2(x)\right|_a^b+\left\langle \frac{\sigma(x)}{\tau(x-1)+\sigma(x-1)}\right\rangle -d_n^2,
\]
which together with Eq. (\ref{eq:fisherdefinition}) yields the wanted expression (\ref{eq:fishersigmatau}), and the theorem is proved.
\end{proof}

\section{The relative Fisher information of the classical discrete polynomials}
\label{sec3}

In this Section we use the ladder-relation-based method to calculate the relative Fisher information of the four classical families of discrete orthogonal polynomials: Charlier, Meixner, Kravchuk and Hahn polynomials. Of course one can alternatively choose the difference-equation-based method, obtaining the same results but the use of the former method is simpler in general; this is especially true for the Hahn case. For convenience, we gather in Table \ref{tab:discrete} the main data for the classical discrete orthogonal polynomials; namely, the orthogonality interval $(a,b)$, the weight function $\omega(x)$ and the coefficients of the second order difference equation. For ease of reading, the ladder relation is indicated below for each concrete family.

\subsection{Charlier polynomials}

The Charlier polynomials $\{C_n^\mu(x)\}$ are orthogonal with respect to a discrete measure whose distribution function has jumps $\frac{\mu^xe^{-\mu}}{x!}$ at $x=0,1,\ldots$, where $\mu>0$, so that its associated weight function is the Poisson distribution
\[
\omega(x)=\frac{\mu^xe^{-\mu}}{x!}
\]
The monic Charlier polynomials fulfil the ladder relation
\[
\Delta C_n^\mu(x)=nC_{n-1}^\mu(x),
\]
so that one has 
\[
I_\omega[C_n^\mu]=\frac{1}{d_n^2}\sum_{x=0}^\infty \omega(x)[\Delta C_n^\mu(x)]^2=
\frac{1}{d_n^2}\sum_{x=0}^\infty \omega(x)n^2[C_{n-1}^\mu(x)]^2=n^2\frac{d_{n-1}^2}{d_n^2}.
\]
Then, taking into account the normalization constant of these polynomials given in Table \ref{tab:discrete} we have finally the following simple values 
\begin{equation}
I_\omega[C_n^\mu]=\frac{n}{\mu},
\label{eq:fishercharlier}
\end{equation}
for the relative Fisher information of the Charlier polynomials.

\subsection{Meixner polynomials}

The Meixner polynomials $\{M_n^{\gamma,\mu}(x)\}$ are orthogonal with respect to a discrete measure whose distribution function has jumps $\frac{\mu^x\Gamma(\gamma+x)}{x!\Gamma(\gamma)}$ at $x = 0,1,\ldots$, where $\gamma > 0$. For integrability and positivity of the measure we need that $\mu\in (0,1)$, so that the corresponding weight function is the negative binomial or Pascal distribution
\[
\omega(x)=\frac{\mu^x\Gamma(\gamma+x)}{x!\Gamma(\gamma)}.
\]
The ladder relation of the monic Meixner polynomials is
\[
\Delta M_n^{\gamma,\mu}(x)=n M_{n-1}^{\gamma+1,\mu}(x).
\]
Moreover, \'Alvarez-Nodarse et al (see \cite{alvareznodarse:jcam97}, subsection 5.1.2) found that these polynomials satisfy the general connection formula
\begin{equation}
M_m^{\eta,\zeta}(x)=\sum_{j=0}^m c_{m,j} M_j^{\alpha,\beta}(x),
\label{eq:expansionmeixner}
\end{equation}
with the values 
\begin{eqnarray}
c_{m,j}=\left(
\begin{array}{c}
m\\
j
\end{array}
\right)
\left(\frac{\zeta}{\zeta-1}\right)^{m-j}(\eta+j)_{m-j}\,_2F_1\left(\left.
\begin{array}{c}
j-m,j+\alpha\\
j+\eta
\end{array}
\right|\frac{\beta(1-\zeta)}{\zeta(1-\beta)}\right)
\label{eq:coefmeixner}
\end{eqnarray}
for the expansion coefficients. To obtain the necessary expansion (\ref{eq:linearization}) of $\Delta M_n^{\gamma,\mu}(x)$, we need the particular case of formulas (\ref{eq:expansionmeixner})-(\ref{eq:coefmeixner}) with the parameters $\zeta = \beta = \mu$, $\eta = \gamma + 1$, $\alpha = \gamma$ and $m = n-1$. Then, one has the expansion coefficients
\[
c_{n-1,j}=(j+1)_{n-j-1}\left(\frac{\mu}{\mu-1}\right)^{n-j-1},
\]
so that the wanted expansion of $\Delta M_n^{\gamma,\mu}(x)$ is
\begin{equation}
\Delta M_{n}^{\gamma,\mu}(x)=n\sum_{j=0}^{n-1}(j+1)_{n-j-1}\left(\frac{\mu}{\mu-1}\right)^{n-j-1} M_j^{\gamma,\mu}(x).
\label{eq:expansionmeixner2}
\end{equation}

Then, the relative Fisher information of the Meixner polynomials is 
\begin{eqnarray*}
I_\omega[M_n^{\gamma,\mu}]&=&\frac{1}{d_n^2}\sum_{x=0}^\infty \omega(x)[\Delta M_n^{\gamma,\mu}(x)]^2\\
&=&
\frac{1}{d_n^2}\sum_{x=0}^\infty \omega(x)\left[n\sum_{j=0}^{n-1}(j+1)_{n-j-1}\left(\frac{\mu}{\mu-1}\right)^{n-j-1} M_j^{\gamma,\mu}(x)\right]^2\\
&=&
\frac{1}{d_n^2} n^2 \sum_{j=0}^{n-1} \left[(j+1)_{n-j-1}\left(\frac{\mu}{\mu-1}\right)^{n-j-1}\right]^2 d_j^2,
\end{eqnarray*}
where we have taken into account the expressions (\ref{eq:fisherdefinition}), (\ref{eq:expansionmeixner2}) and (\ref{eq:orthogonality}), respectively. Finally, with the normalization constant $d_n^2$ given in Table \ref{tab:discrete} we have the value 
\begin{eqnarray}
I_\omega[M_n^{\gamma,\mu}]=n\frac{(1-\mu)^2}{\mu(n+\gamma-1)}\,_2F_1\left(\left.
\begin{array}{c}
1-n,1\\
2-n-\gamma
\end{array}
\right|\mu\right)
\label{eq:fishermeixner}
\end{eqnarray}
for the relative Fisher information of the Meixner polynomials.

\subsection{Kravchuk polynomials}

The Kravchuk polynomials  $\{K_n^p(x,N)\}$ are orthogonal with respect to the discrete weight function $\omega(x)$, $x = 0, 1,\ldots,N$, given by the binomial distribution
\[
\omega(x)=\binom{N}{x} p^x(1-p)^{N-x},
\]
with  $0 < p < 1$  and $n\leq N-1$. The ladder relation for the monic Kravchuk polynomials is
\begin{equation}
\Delta K_n^{p}(x,N)=n K_{n-1}^{p}(x,N-1).
\label{eq:ladderkravchuk}
\end{equation}

These polynomials fulfil the following general connection formula (see \cite{alvareznodarse:jcam97}, subsection 5.1.3)
\[
K_m^{p}(x,M)=\sum_{j=0}^m c_{m,j} K_j^{q}(x,N),
\]
with the following expansion coefficients
\begin{eqnarray*}
c_{m,j}=\left(
\begin{array}{c}
m\\
j
\end{array}
\right)
(M-m+1)_{m-j}(-p)^{m-j}
\,_2F_1\left(\left.
\begin{array}{c}
j-m,j-N\\
j-M
\end{array}
\right|\frac{q}{p}\right).
\end{eqnarray*}

Keeping in mind Eq. (\ref{eq:ladderkravchuk}), we are interested in the particular case ($q = p$, $M = N-1$, $m = n-1$) of this connection formula so that we have
\begin{eqnarray*}
c_{n-1,j}=\left(
\begin{array}{c}
n-1\\
j
\end{array}
\right)
p^{n-j-1}(n-j-1)!=(j+1)_{n-j-1} p^{n-j-1}.
\end{eqnarray*}
Then, we have the wanted expansion of $\Delta K_n^p(x,N)$ as
\begin{equation}
\Delta K_{n}^{p}(x,N)=n\sum_{j=0}^{n-1}(j+1)_{n-j-1}p^{n-j-1} K_j^{p}(x,N),
\label{eq:expansionkravchuk2}
\end{equation}
which corresponds to the expansion (5) for Kravchuk polynomials. Then, the relative Fisher information of these objects is 
\begin{eqnarray*}
I_\omega[K_n^{p}(N)]&=&\frac{1}{d_n^2}\sum_{x=0}^{N} \omega(x)[\Delta K_n^{p}(x,N)]^2\\
&=&
\frac{1}{d_n^2}\sum_{x=0}^{N} \omega(x)\left[n\sum_{j=0}^{n-1}(j+1)_{n-j-1}p^{n-j-1} K_j^{p}(x,N)\right]^2\\
&=&
\frac{1}{d_n^2} n^2 \sum_{j=0}^{n-1} \left[(j+1)_{n-j-1}p^{n-j-1}\right]^2 d_j^2,
\end{eqnarray*}
where we have taken into account the expressions (\ref{eq:fisherdefinition}), (\ref{eq:expansionkravchuk2}) and (\ref{eq:orthogonality}), respectively. Finally, with the normalization constant $d_n^2$ given in Table \ref{tab:discrete} we have the value
\begin{eqnarray}
I_\omega[K_n^{p}(N)]=\frac{n}{N-n+1}\frac{1}{p(1-p)}
\,_2F_1\left(\left.
\begin{array}{c}
1-n,1\\
N-n+2
\end{array}
\right|\frac{p}{p-1}\right)
\label{eq:fisherkravchuk}
\end{eqnarray}
for the relative Fisher information of the Kravchuk polynomials.

\subsection{Hahn polynomials}

The Hahn polynomials $\{h_n^{\alpha,\beta}(x,N)\}$ are orthogonal on $[0, N-1]$ with respect to the hypergeometric distribution
\[
\omega(x)=\frac{\Gamma(N+\alpha-x)\Gamma(\beta+x+1)}{(N-x-1)!x!}
\]
with $\alpha > -1$ and $\beta > -1$. The ladder relation fulfilled by the monic Hahn polynomials is 
\begin{equation}
\Delta h_n^{\alpha,\beta}(x,N)=n h_{n-1}^{\alpha+1,\beta+1}(x,N-1).
\label{eq:ladderhahn}
\end{equation}
Moreover, they satisfy the following general connection formula (see \cite{alvareznodarse:jcam97}, subsection 5.1.10)
\[
h_m^{\gamma,\mu}(x,M)=\sum_{j=0}^m c_{m,j} h_j^{\alpha,\beta}(x,N),
\]
with the expansion coefficients
\begin{eqnarray*}
c_{m,j}&=&\left(
\begin{array}{c}
m\\
j
\end{array}
\right)
\frac{(1+j-M)_{m-j}(1+j+\mu)_{m-j}}{(1+j+m+\gamma+\mu)_{m-j}}\\
&&\times\,_4F_3\left(\left.
\begin{array}{c}
m-j,1+j-N,j+\beta+1,1+m+j+\gamma+\mu\\
1+j-M,j+\mu+1,2j+\alpha+\beta+2
\end{array}
\right|1\right).
\end{eqnarray*}

In our case, keeping in mind (\ref{eq:ladderhahn}), we are interested in the particular connection problem with $M = N-1$, $m = n-1$, $\gamma = \alpha +1$ and $\mu = \beta +1$. The corresponding expansion coefficients are 
\begin{multline}
c_{n-1,j}=\left(
\begin{array}{c}
n-1\\
j
\end{array}
\right)
\frac{(2+j-N)_{n-1-j}(2+j+\beta)_{n-1-j}}{(2+j+n+\alpha+\beta)_{n-1-j}}\\
\times\,_4F_3\left(\left.
\begin{array}{c}
j-n+1,1+j-N,j+\beta+1,2+n+j+\alpha+\beta\\
2+j-N,j+\beta+2,2j+\alpha+\beta+2
\end{array}
\right|1\right),
\label{eq:coefhahn}
\end{multline}
so that the wanted expansion of $\Delta h_n^{\alpha,\beta}(x)$ is
\begin{multline}
\label{eq:expansionhahn2}
\Delta h_n^{\alpha,\beta}(x,N)=n\sum_{j=0}^{n-1}  \left(
\begin{array}{c}
n-1\\
j
\end{array}
\right)
\frac{(2+j-N)_{n-1-j}(2+j+\beta)_{n-1-j}}{(2+j+n+\alpha+\beta)_{n-1-j}}\\
\times\,_4F_3\left(\left.
\begin{array}{c}
j-n+1,1+j-N,j+\beta+1,2+n+j+\alpha+\beta\\
2+j-N,j+\beta+2,2j+\alpha+\beta+2
\end{array}
\right|1\right)\\
\times h_j^{\alpha,\beta}(x,N),
\end{multline}
which corresponds to the expansion (\ref{eq:linearization}) for Hahn polynomials. Then, the relative Fisher information of these objects is 
\begin{multline*}
I_\omega[h_n^{\alpha,\beta}(N)]=\frac{1}{d_n^2}\sum_{x=0}^{N-1}\omega(x)[\Delta h_n^{\alpha,\beta}(x,N)]^2\\
=\frac{1}{d_n^2}\sum_{x=0}^N\omega(x)\left[n\sum_{j=0}^{n-1}c_{n-1,j} h_j^{\alpha,\beta}(x,N)\right]^2
=\frac{1}{d_n^2}n^2\sum_{j=0}^{n-1} c_{n-1,j}^2 d_j^2,
\end{multline*}
where we have taken into account the expressions (\ref{eq:fisherdefinition}), (\ref{eq:expansionhahn2}) and (\ref{eq:orthogonality}), respectively. Finally, with the normalization constant $d_n^2$ given in Table \ref{tab:discrete} and constant $c_{n-1,j}$ given in Eq. (\ref{eq:coefhahn}) we have the value
\begin{equation}
I_\omega[h_n^{\alpha,\beta}(N)] = A_1 A_2(B_1 B_2 B_3 +C_1C_2C_3+D_1D_2D_3)
\label{eq:fisherhahn}
\end{equation}
for the relative Fisher information of the Hahn polynomials, where
\[
A_1=\frac{n^2(\alpha+\beta+2n+1)(N-n-1)!\Gamma(\alpha+\beta+n+1)}{n!\Gamma(\alpha+n+1)\Gamma(\beta+n+1)\Gamma(\alpha+\beta+N+n+1)},
\]
\[
A_2=\frac{((\alpha+\beta+n+1)_n)^2\Gamma(\alpha+1)\Gamma(\beta+1)\Gamma(\alpha+\beta+N+1)}{(\alpha+\beta+1)(N-1)!\Gamma(\alpha+\beta+1)},
\]
\[
B_1=\left(\frac{(n-1)!(\beta+1)(\alpha+\beta+N+1)(-\alpha-\beta-n-N)_{n-1}(\beta+2)_{n-1}}{(\alpha+\beta+n+2)_{n-1}(-\alpha-\beta-n-1)_{n-1}(\alpha+\beta+2)(N+\beta)}\right)^2,
\]
\[
B_2=\frac{(-1)^{n-1}(\alpha+1)_{n-1}\left(\frac{\alpha+\beta+3}{2}\right)_{n-1}(\alpha+\beta+1)_{n-1}(1-N)_{n-1}}{(n-1)!\left(\frac{\alpha+\beta+1}{2}\right)_{n-1}(\beta+1)_{n-1}(\alpha+\beta+N+1)_{n-1}},
\]
\[
B_3=\,_5F_4\left(\left.
\begin{array}{c}
1-n,1,1-n-\beta,1-n-\alpha-\beta-N,2-n-\frac{\alpha+\beta+1}{2}\\
1-n-\alpha,2-n-\frac{\alpha+\beta+3}{2},1-n-\alpha-\beta,1-n+N
\end{array}
\right|-1\right),
\]
\[
C_1=\frac{2(-1)^n((n-1)!)^2(\beta+1)(\alpha+\beta+N+1)(-\alpha-\beta-n-N)_{n-1}}{(\alpha+\beta+n+2)_{n-1}^2(-\alpha-\beta-n-1)_{n-1}^2(\alpha+\beta+2)^2},
\]
\[
C_2=\frac{(\beta+2)_{n-1}(1-N)(\alpha+1)(-\alpha-n)_{n-1}(2-N)_{n-1}(\alpha+\beta+2n+1)}{n!(-N-\beta)^2\Gamma(\alpha+\beta+2)},
\]
\[
C_3=\Gamma(\alpha+\beta+n+1)\,_3F_2\left(\left.
\begin{array}{c}
1,\frac{\alpha+\beta+3}{2}+n,\alpha+\beta+n+1\\
n+1,\frac{\alpha+\beta+1}{2}+n
\end{array}
\right|-1\right),
\]
\[
D_1=\left(\frac{(n-1)!(N-1)(\alpha+1)(-\alpha-n)_{n-1}(-N+2)_{n-1}}{(\alpha+\beta+n+2)_{n-1}(-\alpha-\beta-n-1)_{n-1}(\alpha+\beta+2)(N+\beta)}\right)^2,
\]
\[
D_2=\frac{(-1)^{n-1}\left(\frac{\alpha+\beta+3}{2}\right)_{n-1}(\beta+1)_{n-1}(\alpha+\beta+N+1)_{n-1}(\alpha+\beta+1)_{n-1}}{(n-1)!(1-N)_{n-1}(\alpha+1)_{n-1}\left(\frac{\alpha+\beta+1}{2}\right)_{n-1}},
\]
\[
D_3=\,_5F_4\left(\left.
\begin{array}{c}
1-n,1,1-n+N,1-n-\alpha,2-n-\frac{\alpha+\beta+1}{2}\\
2-n-\frac{\alpha+\beta+3}{2},1-n-\beta,1-n-\alpha-\beta-N,1-n-\alpha-\beta
\end{array}
\right|-1\right).
\]
These expressions have been obtained after a long, but not very difficult, simplifying and rewriting process, where key equations (5, page 438), (15, page 535) and (1, page 552) from \cite{prudnikov_86} have been employed.

\section{Limiting cases and numerical studies}
\label{sec4}

Here we will analyze the relative Fisher information of the classical discrete polynomials in terms of the degree and parameters which characterize them. First of all, we observe from Eq. (\ref{eq:fishercharlier}) that the relative Fisher information of Charlier polynomials $I_\omega[C_n^\mu]$ depends linearly on $n$ and inversely on $\mu$. The expressions (\ref{eq:fishermeixner}), (\ref{eq:fisherkravchuk}) and (\ref{eq:fisherhahn}) for the relative Fisher information of Meixner, Kravchuk and Hahn polynomials are much more complicated and, therefore, we discuss them in various limiting cases.
%they deserve a careful analysis separately. Let’s do it.

\subsection{Meixner polynomials}

According to Eq. (\ref{eq:fishermeixner}) and the properties of the involved hypergeometric function, the dependence of the relative Fisher information of the Meixner polynomials $I_\omega[M_n^{\gamma,\mu}]$ on the degree $n$ and the parameters $\gamma$ and $\mu$ have the following characteristics. First,
% it grows when the degree $n$ increases,
it grows with the degree $n$ for $\gamma>1$ and it decreases when $n$ increases for $\gamma<1$,
with the following asymptotic behaviour, obtained by considering its Taylor expansion when $n\to\infty$
\[
I_\omega[M_n^{\gamma,\mu}]\sim \frac{1-\mu}{\mu}+\frac{1-\gamma}{n},\; (n\to \infty).
\]
Moreover, the larger the value of $\gamma$ is, the slower is the growth of the Fisher information. This behaviour can be observed in Figure \ref{fig:meix1}, where the relative Fisher information for the polynomials $M_n^{3/2,1/4}$, $M_n^{4,1/4}$ and $M_n^{3/2,1/7}$, are represented as a function of the degree $n$.
Indeed, this quantity tends towards 3 and 6 for the cases with $\mu=\frac14$ and $\mu=\frac17$, respectively; and, moreover, its increasing rate is slower for $\gamma=4$ than for $\gamma=\frac32$.

As a function of the parameter $\mu$, the Fisher information has the following asymptotic behaviours,
\begin{eqnarray*}
I_\omega[M_n^{\gamma,\mu}]\sim \frac{n}{n+\gamma-1}\,_2F_1\left(\left.
\begin{array}{c}
1-n,1\\
2-n-\gamma
\end{array}
\right|1\right)
(1-\mu)^2, \; (\mu\to 1),
\end{eqnarray*}
\[
I_\omega[M_n^{\gamma,\mu}]\sim\frac{n}{n+\gamma-1} \frac{1}{\mu}, \; (\mu\to 0).
\]
% Then,
% \[
% \lim_{\mu\to 1}I_\omega[M_n^{\gamma,\mu}]=0,\quad \lim_{\mu\to 0}I_\omega[M_n^{\gamma,\mu}]=+\infty.
% \]
Thus, when $\mu$ tends to 1, the larger the value of $n$ or the smaller the value of $\gamma$ are, the faster the relative Fisher information approaches zero. And when $\mu$ tends to 0, the larger the value of $n$ or the smaller the value of $\gamma$ are, the faster the relative Fisher information approaches infinity. This is observed in Figure \ref{fig:meix2} where the Fisher information for the polynomials $M_2^{3/2,\mu}$, $M_2^{4,\mu}$ and $M_5^{3/2,\mu}$, as a function of the parameter $\mu$.
These results suggest that
\[
I_\omega[M_{n_1}^{\gamma,\mu}] < I_\omega[M_{n_2}^{\gamma,\mu}] \iff n_1<n_2,
\]
and
\[
I_\omega[M_{n}^{\gamma_1,\mu}] < I_\omega[M_{n}^{\gamma_2,\mu}] \iff \gamma_1>\gamma_2.
\]

As a function of the parameter $\gamma$, the relative Fisher information have the following asymptotic behaviours,
\[
I_\omega[M_n^{\gamma,\mu}]\sim \frac{n(1-\mu)^2}{\mu}\frac{1}{\gamma},\; (\gamma\to +\infty),
\]
\[
I_\omega[M_n^{\gamma,\mu}]\sim \frac{n}{\gamma}(1-\mu)^2\mu^{n-2},\; (\gamma\to 0).
\]
% Then,
% \[
% \lim_{\gamma\to +\infty}I_\omega[M_n^{\gamma,\mu}]=0,\quad \lim_{\gamma\to 0}I_\omega[M_n^{\gamma,\mu}]=+\infty.
% \]
Thus, when $\gamma$ tends to infinity, the larger the value of $n$ or the smaller the value of $\mu$ are, the slower the Fisher information approaches zero. And when $\gamma$ tends to zero, the rate of growth of the relative Fisher information depends on the value of $n(1-\mu)^2 \mu^{n-2}$. This is observed in Figure \ref{fig:meix3} where the Fisher information of the polynomials $M_2^{\gamma,1/4}$, $M_2^{\gamma,3/4}$ and $M_5^{\gamma,1/4}$ are plotted as a function of the parameter $\gamma$.

\subsection{Kravchuk polynomials}

A similar analysis for the expression (\ref{eq:fisherkravchuk}) of the relative Fisher information of the Kravchuk polynomials $I_\omega(K_n^p)$ first shows that this quantity grows when the degree $n$ is increasing, being $(p,N)$ fixed, so that for the maximum degree (namely, $n = N - 1$) it has the value

\[
I_\omega[K_{N-1}^p(N)]=\frac{(1-p)^{1-N}+(1-N)p-1}{Np^3}.
\]
Moreover, when $N$ tends to infinity, this maximum value increases as
\[
I_\omega[K_{N-1}^p(N)]\sim \frac{1}{N(1-p)^{N-1}p^3}.
\]

Furthermore, the dependence of the Fisher quantity on the degree $n$ is numerically shown for the polynomials $K_n^{1/7}(15)$, $K_n^{1/7}(20)$ and $K_n^{1/3}(15)$ in Figure \ref{fig:kra1}. Therein we observe not only its growth with $n$, but also that its rate of increasing is slowing down when the parameter $N$ gets bigger, being $p$ fixed, or when the parameter $p$ gets smaller, being $N$ fixed. This behaviour can also be obtained from expression (\ref{eq:fisherkravchuk}). Thus, these results suggest that
\[
I_\omega[K_n^p(N_1)]<I_\omega[K_n^p(N_2)] \iff N_1>N_2,
\]
and
\[
I_\omega[K_n^{p_1}(N)]<I_\omega[K_n^{p_2}(N)] \iff p_1<p_2.
\]

The dependence of $I_\omega[K_n^p(N)]$ on the parameter $p$ when $n$ and $N$ are fixed is further studied for the polynomials $K_2^p(15)$, $K_4^p(15)$ and $K_2^p(20)$ in Figure \ref{fig:kra2}. We remark that it has a concave shape, with a minimum around $p=1/2$ but with a different asymptotic behaviour at its two extremes; namely,
the asymptotic behaviour when $p$ tends to 0 is
\[
I_\omega[K_n^p(N)]\sim \frac{n}{N-n+1}\frac{1}{p},
\]
and when $p$ tends to 1 is
\[
I_\omega[K_n^p(N)]\sim \frac{n!}{(N-n+1)_{n}}\frac{1}{(1-p)^n}.
\]

As a function of the parameter $N$, the relative Fisher information decreases as the value of $N$ increases. This is numerically shown in Figure \ref{fig:kra3} for the Fisher information of the polynomials $K_2^{1/7}(N)$, $K_4^{1/7}(N)$ and $K_2^{1/3}(N)$, is represented as a function of the value of $N$.

\subsection{Hahn polynomials}

The complexity of the expression (\ref{eq:fisherhahn}) hardly allows for an analysis analogous to that of the Meixner and Kravchuk polynomials. However, the numerical analysis of the relative Fisher information of the Hahn polynomials $I_\omega[h_n^{\alpha,\beta}(N)]$ first shows that this quantity grows with the degree $n$, being $\alpha$, $\beta$ and $N$ fixed. Moreover, like in the Kravchuk case, the maximum is obtained for $n=N-1$, whose value depends on $\alpha$, $\beta$ and $N$. This behaviour can be observed in Figure \ref{fig:hahn1}, where the relative Fisher information is represented as a function of the degree $n$ for the polynomials $h_n^{0,0}(20)$, $h_n^{0,0}(30)$ and $h_n^{3,-1/2}(20)$.

The dependence of $I_\omega[h_n^{\alpha,\beta}(N)]$ on the parameter $N$ is considered in Figure \ref{fig:hahn2}, where the relative Fisher information is represented as a function of $N$ for the polynomials $h_2^{0,0}(N)$, $h_{10}^{0,0}(N)$ and $h_2^{3,-1/2}(N)$. Thus, we observe that $I_\omega[h_n^{\alpha,\beta}(N)]$ decreases when $N$ increases, being $n$, $\alpha$ and $\beta$ fixed.

The relative Fisher information is represented in Figure \ref{fig:hahn3} as a function of the parameter $\alpha$ for the polynomials $h_2^{\alpha,0}(20)$, $h_2^{\alpha,3}(20)$, $h_{10}^{\alpha,0}(20)$ and $h_2^{\alpha,0}(30)$. Herein, we observe the expected divergent behaviour when $\alpha\to -1$, and a linear asymptotic behaviour for $\alpha\to +\infty$. The dependence on the parameter $\beta$ is similar to that on $\alpha$, being divergent when $\beta\to -1$ and asymptotically linear for $\beta\to +\infty$. Moreover, these behaviours are much more emphasized for $\beta$ than for $\alpha$. This can be observed in Figure \ref{fig:hahn4}, where the relative Fisher information is represented as a function of $\beta$ for the polynomials $h_2^{0,\beta}(20)$, $h_2^{3,\beta}(20)$, $h_{10}^{0,\beta}(20)$ and $h_2^{0,\beta}(30)$.

\section{Conclusions}

This work is a contribution to the information theory of both the discrete distributions and the special functions of applied mathematics and mathematical physics. Here, we have studied both analytical and numerically the relative Fisher information of the classical orthogonal polynomials in a discrete variable (i.e., the Charlier, Meixner, Kravchuk and Hahn polynomials) with respect to their weight functions. This quantity measures the separation between the Rakhmanov density associated to these polynomials and their respective weight functions. From a technical point of view, it can also be interpreted as a measure of the pointwise concentration of  the probability cloud associated to the polynomial under consideration. As well, it can be viewed as a quantifier of the oscillatory character of the polynomial itself and its corresponding Rakhmanov density.

Two different analytical methods based on the ladder relation and the second order difference relation, respectively, satisfied by these polynomials have been proposed. Then, the ladder-relation-based method is used to find close expressions for the Fisher quantity in all four families. Finally, the resulting general expressions are numerically investigated in terms of the degree and the involved parameters of the polynomials.

\section*{Acknowledgements}

This work has been partially supported by Junta de Andalucia under grants FQM-4643 and FQM-2445, as well the Ministerio de Ciencia e Innovaci\'on under grant FIS2008-02380/FIS. We belong to the Andalusian research group FQM-207. We are grateful to Roberto Costas valuable discussions.

\newpage

\begin{table}
\begin{center}
\begin{tabular}{|c|c|c|c|}
\hline
\Ti   &  {\bf Charlier}  & {\bf Meixner}  & {\bf Kravchuk} \\
$y_n(x)$ &  $C_n^{ \mu}(x)$ & $M_n^{\gamma,\mu}(x)$ &
$K_n^{p }(x)$  \\
\hline \hline
\Ti\Bi $\sigma(x) $ & $x$  & $x$  & $x$  \\
\Ti\Bi $\tau(x)$ & $\mu-x$ &  $(\mu-1)x+\mu\gamma$  &
 $ \displaystyle \frac{Np-x}{1-p} $  \\
\Ti\B   $\lambda_n$ & $n$ & $(1-\mu)n $ & $ \displaystyle \frac{n}{1-p}$ \\
\hline
\Ti\Bi$\, [a,b] \, $ & $[0,\infty)$ & $[0,\infty)$ & $[0,N]$ \\
\Ti\Bi $\omega(x)$& $  \displaystyle \frac{e^{-\mu} \mu^x}{\Gamma(x+1)}$ &
  $  \displaystyle \frac{\mu^x  \Gamma(\gamma+x)}{\Gamma(\gamma)\Gamma(x+1)}$  &
\Ti\Bi $\displaystyle \frac{N! p^x (1-p)^{N-x}}{\Gamma(N+1-x)\Gamma(x+1)}$  \\
\Ti\Bi  &  ($\mu >0$) & ($\gamma>0,0<\mu<1$)  & ($0<p<1$, $n\leq N-1$)  \\
\hline
\T\B $d_n^2$ & $n! \mu^n$ & $ \displaystyle \frac{n!(\gamma)_n \mu^n}{(1-\mu)^{\gamma+2n}}$  &
$ \displaystyle \frac{n!N! p^n (1-p)^n}{(N-n)!}$  \\
\hline\hline
\Ti  & \multicolumn{3}{c|}{\textbf{Hahn}}\\
$y_n(x)$ &  \multicolumn{3}{c|}{$h_n^{\alpha,\beta}(x;N)$}  \\
\hline \hline
\Ti  $\sigma(x) $ & \multicolumn{3}{c|}{$x(N+\alpha -x)$}  \\
\Ti $\tau(x)$ & \multicolumn{3}{c|}{$(\beta+1)(N-1)-(\alpha+\beta+2)x$}\\
\Ti   $\lambda_n$ &  \multicolumn{3}{c|}{$ n(n+\alpha+\beta+1)$}  \\
\hline
\Ti\Bi $\, [a,b] \, $ & \multicolumn{3}{c|}{$[0,N-1]$}  \\
\Ti\Bi  $\omega(x)$& \multicolumn{3}{c|}{$ \displaystyle \frac{ \Gamma(N+\alpha-x)\Gamma(\beta+x+1)}{\Gamma(N-x)\Gamma(x+1)}$} \\
\Ti\Bi  &  \multicolumn{3}{c|}{(${\alpha,\beta \geq -1 \ , \ n\leq N-1}$)}   \\
\hline
\T\B $d_n^2$ & \multicolumn{3}{c|}{$ \displaystyle \frac{n! \Gamma(n+\alpha+1) \Gamma(n+\beta+1) \Gamma(N+n+\alpha+\beta+1)}{(2n+\alpha+\beta+1)(N-n-1)! \Gamma(n+\alpha+\beta+1) (n+\alpha+\beta+1)_n^2}$} \\
\hline
\end{tabular}
\caption{Coefficients $\sigma(x)$, $\tau(x)$ and $\lambda_n$ of the difference equation (\ref{eq:differenceequation}), interval of definition $[a,b]$, orthogonality weight $\omega(x)$, and norm $d_n^2$ for the classical families of monic discrete orthogonal polynomials.}
\label{tab:discrete}
\end{center}
\end{table}

\newpage

\begin{figure}
\begin{center}
\includegraphics[height=11cm,angle=270]{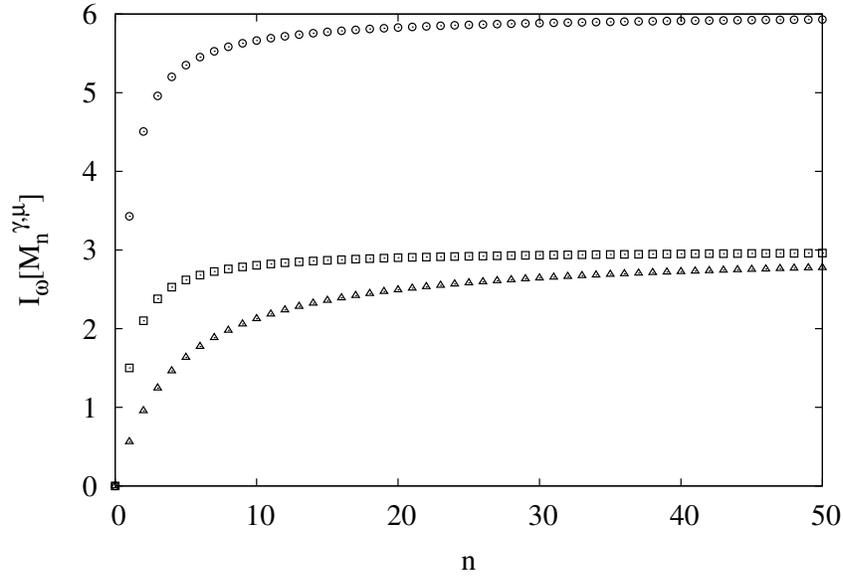}
\caption{Relative Fisher informations $I_\omega[M_n^{3/2,1/4}]$ ($\square$), $I_\omega[M_n^{4,1/4}]$ ($\vartriangle$) and $I_\omega[M_n^{3/2,1/7}]$ ($\bigcirc$) as a function of the degree $n$.}
\label{fig:meix1}\end{center}
\end{figure}

\begin{figure}
\begin{center}
\includegraphics[height=11cm,angle=270]{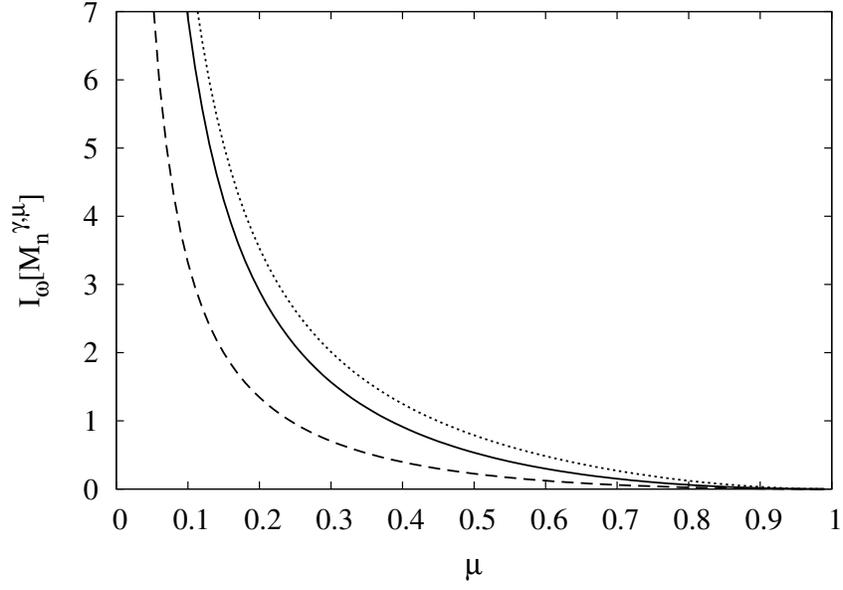}
\caption{Relative Fisher informations $I_\omega[M_2^{3/2,\mu}]$ (solid line), $I_\omega[M_2^{4,\mu}]$ (dashed line) and $I_\omega[M_5^{3/2,\mu}]$ (dotted line) as a function of the parameter $\mu$.}
\label{fig:meix2}\end{center}
\end{figure}

\begin{figure}
\begin{center}
\includegraphics[height=11cm,angle=270]{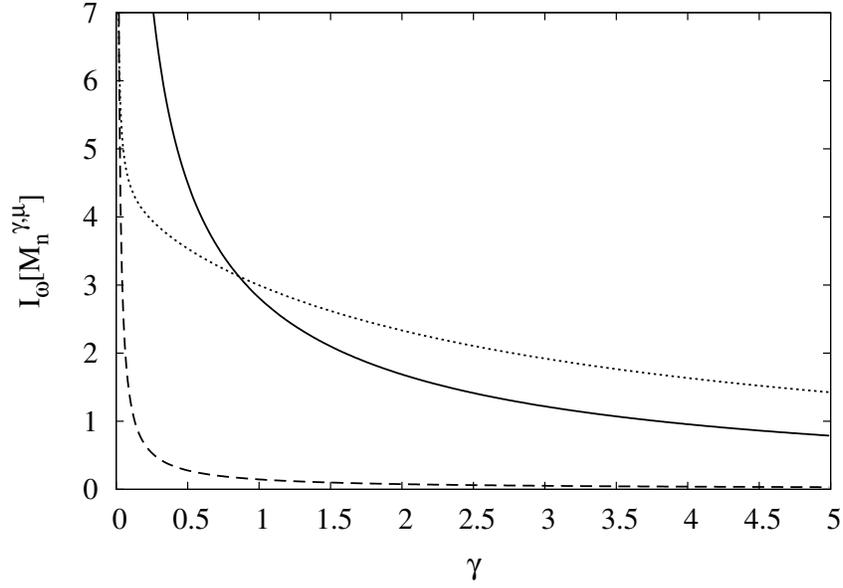}
\caption{Relative Fisher informations $I_\omega[M_2^{\gamma,1/4}]$ (solid line), $I_\omega[M_2^{\gamma,3/4}]$ (dashed line) and $I_\omega[M_5^{\gamma,1/4}]$ (dotted line) as a function of the parameter $\gamma$.}
\label{fig:meix3}\end{center}
\end{figure}

\begin{figure}
\begin{center}
\includegraphics[height=11cm,angle=270]{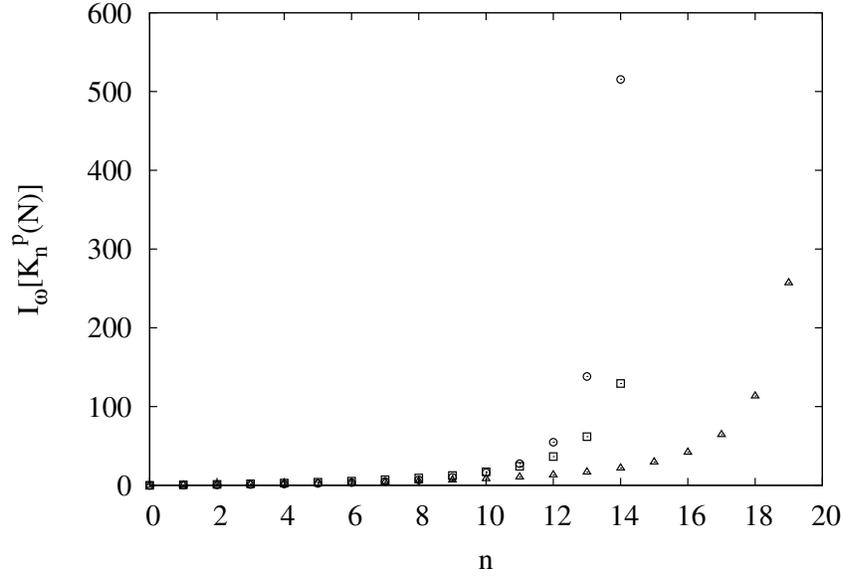}
\caption{Relative Fisher informations $I_\omega[K_n^{1/7}(15)]$ ($\square$), $I_\omega[K_n^{1/7}(20)]$ ($\vartriangle$) and $I_\omega[K_n^{1/3}(15)]$ ($\bigcirc$) as a function of the degree $n$.}
\label{fig:kra1}
\end{center}
\end{figure}

\begin{figure}
\begin{center}
\includegraphics[height=11cm,angle=270]{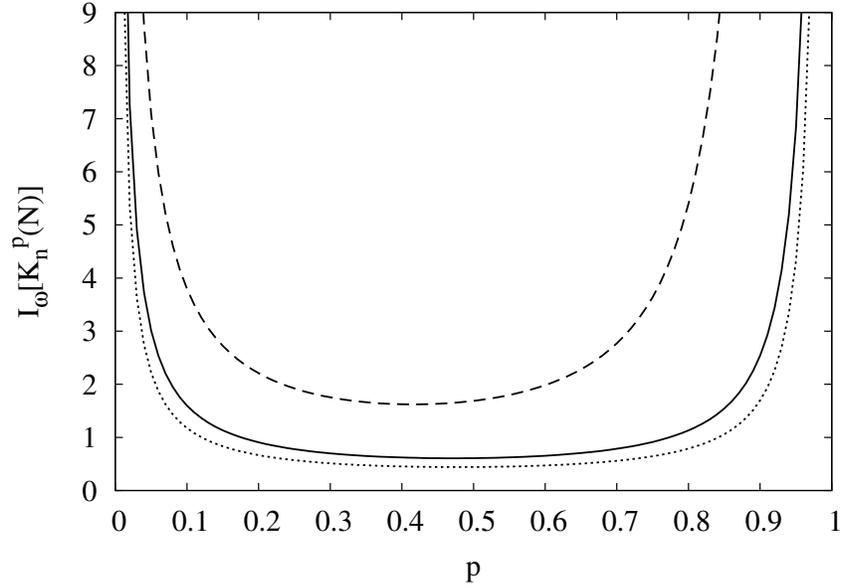}
\caption{Relative Fisher informations $I_\omega[K_2^{p}(15)]$ (solid line), $I_\omega[K_4^{p}(15)]$ (dashed line) and $I_\omega[K_2^{p}(20)]$ (dotted line) as a function of the parameter $p$.}
\label{fig:kra2}
\end{center}
\end{figure}

\begin{figure}
\begin{center}
\includegraphics[height=11cm,angle=270]{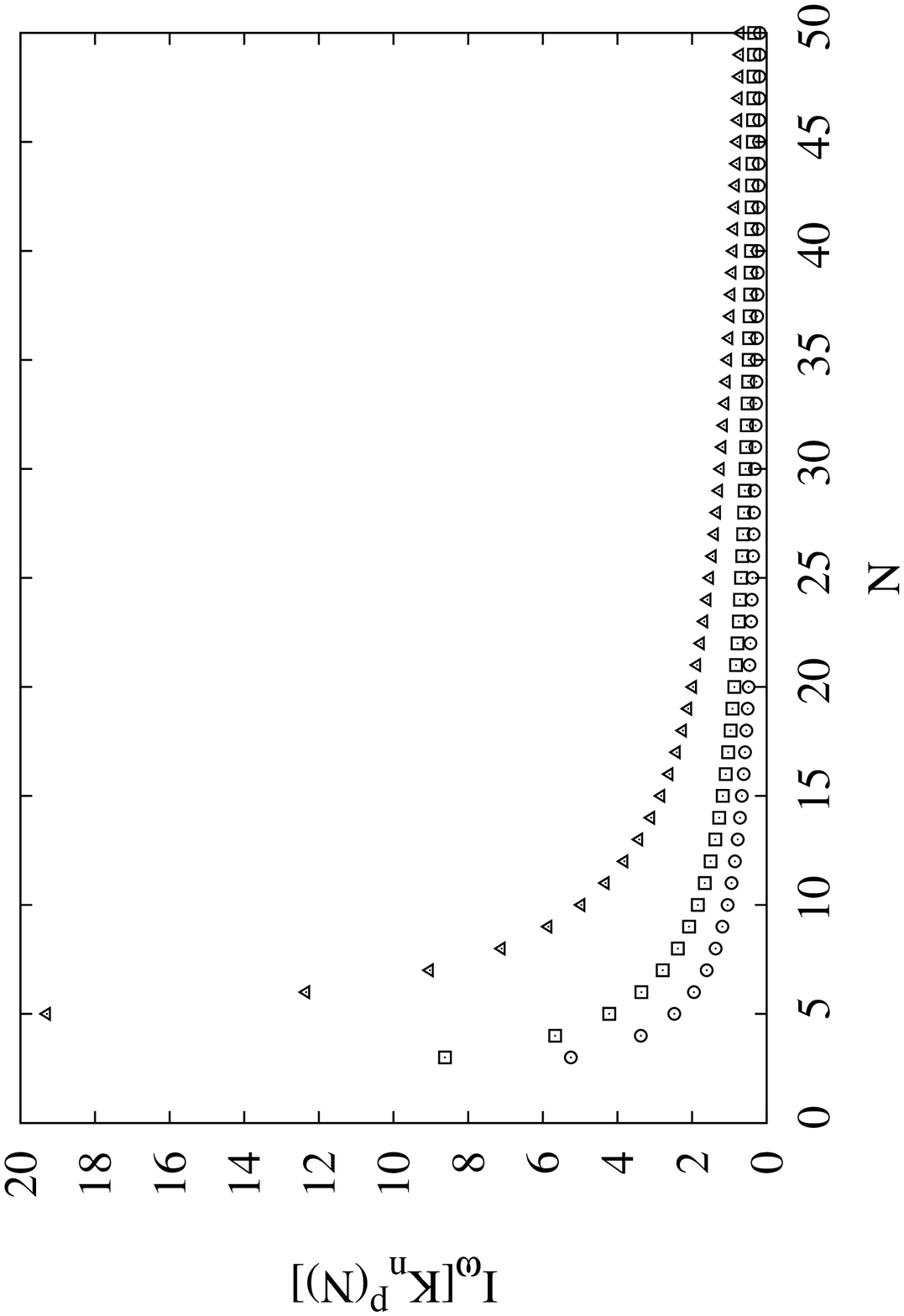}
\caption{Relative Fisher informations $I_\omega[K_2^{1/7}(N)]$ ($\square$), $I_\omega[K_4^{1/7}(N)]$ ($\vartriangle$) and $I_\omega[K_2^{1/3}(N)]$ ($\bigcirc$) as a function of the parameter $N$.}
\label{fig:kra3}
\end{center}
\end{figure}

\begin{figure}
\begin{center}
\includegraphics[height=11cm,angle=270]{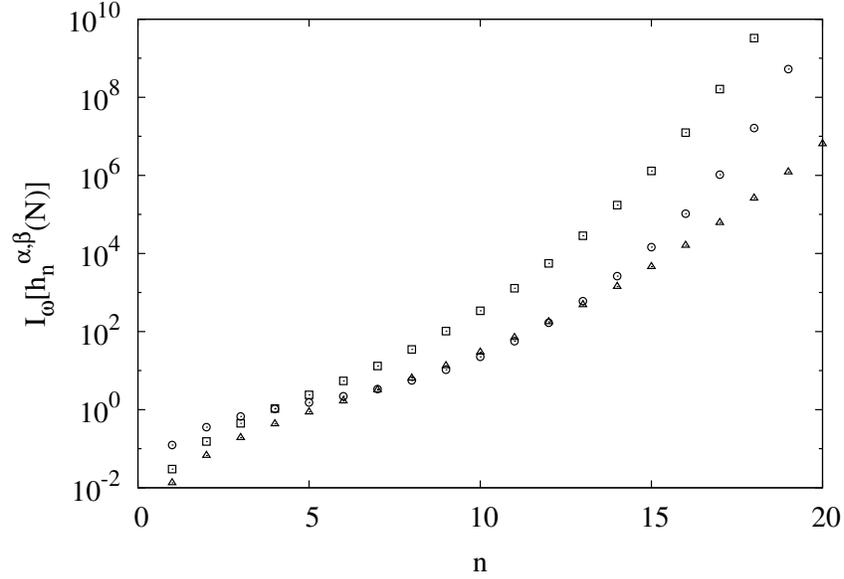}
\caption{Relative Fisher informations $I_\omega[h_n^{0,0}(20)]$ ($\square$), $I_\omega[h_n^{0,0}(30)]$ ($\vartriangle$) and $I_\omega[h_n^{3,-1/2}(20)]$ ($\bigcirc$) as a function of the degree $n$.}
\label{fig:hahn1}
\end{center}
\end{figure}

\begin{figure}
\begin{center}
\includegraphics[height=11cm,angle=270]{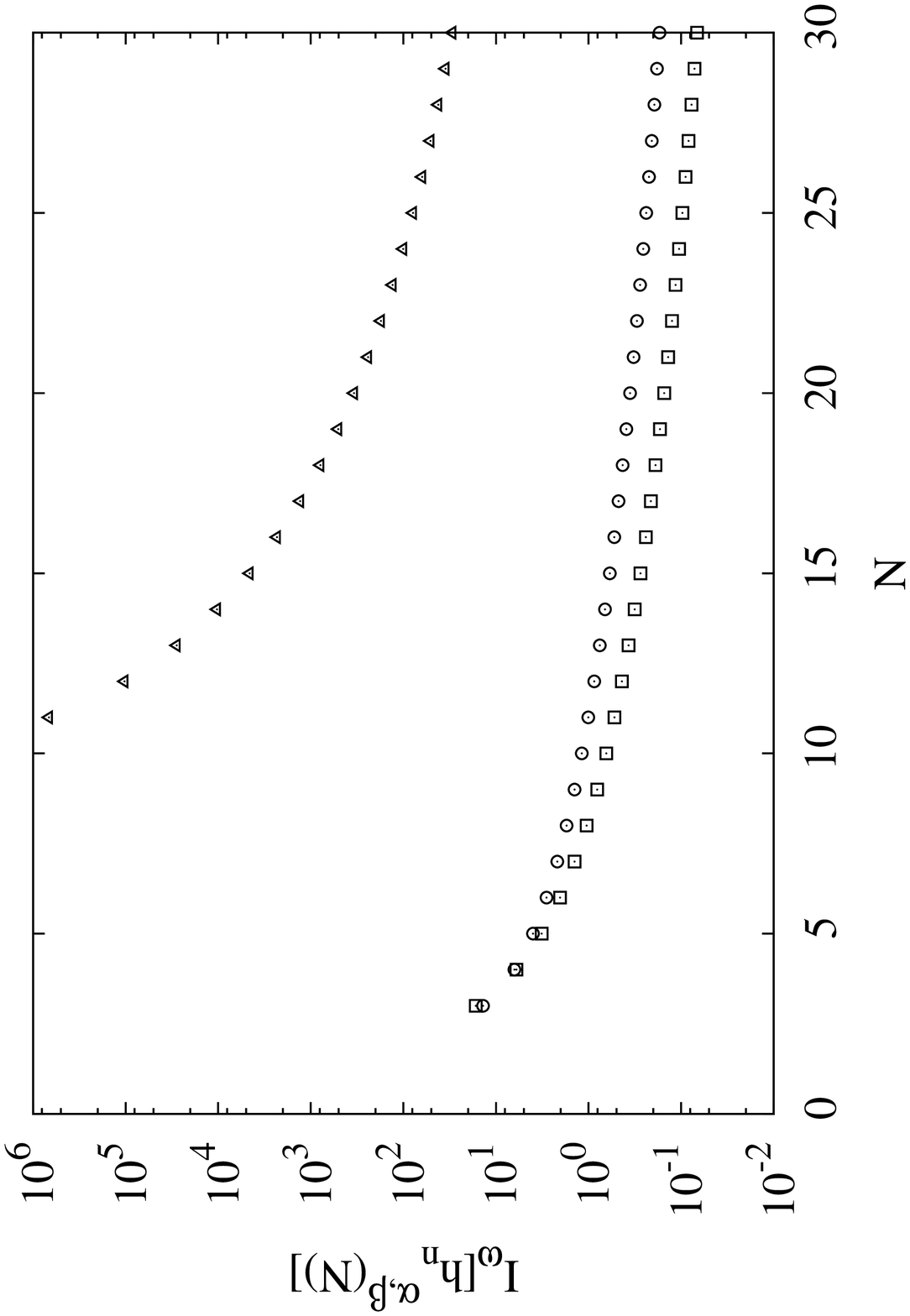}
\caption{Relative Fisher informations $I_\omega[h_2^{0,0}(N)]$ ($\square$), $I_\omega[h_10^{0,0}(N)]$ ($\vartriangle$) and $I_\omega[h_2^{3,-1/2}(N)]$ ($\bigcirc$) as a function of the parameter $N$.}
\label{fig:hahn2}
\end{center}
\end{figure}

\begin{figure}
\begin{center}
\includegraphics[height=11cm,angle=270]{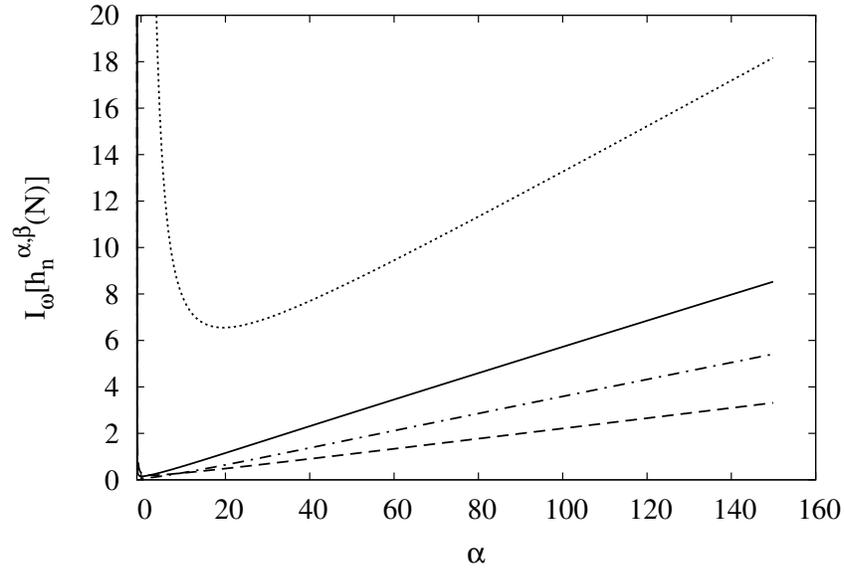}
\caption{Relative Fisher informations $I_\omega[h_2^{\alpha,0}(20)]$ (solid line), $I_\omega[h_2^{\alpha,3}(20)]$ (dashed line), $I_\omega[h_10^{\alpha,0}(20)]$ (dotted) and $I_\omega[h_2^{\alpha,0}(30)]$ (dash-dotted line) as a function of the parameter $\alpha$.}
\label{fig:hahn3}
\end{center}
\end{figure}

\begin{figure}
\begin{center}
\includegraphics[height=11cm,angle=270]{}
\caption{Relative Fisher informations $I_\omega[h_2^{0,\beta}(20)]$ (solid line), $I_\omega[h_2^{3,\beta}(20)]$ (dashed line), $I_\omega[h_{10}^{0,\beta}(20)]$ (dotted line) and $I_\omega[h_2^{0,\beta}(30)]$ (dash-dotted line) as a function of the parameter $\beta$.}
\label{fig:hahn4}
\end{center}
\end{figure}

\end{document}